\newtheorem{theorem}{Theorem}
\newtheorem{mydef}{Definition}
\patchcmd{\maketitle}{\@copyrightpermission}{}{}{}
\DeclareSymbolFont{matha}{OML}{txmi}{m}{it}%
\DeclareMathSymbol{\q}{\mathord}{matha}{113}
\DeclareMathSymbol{\w}{\mathord}{matha}{119}
\def\cZ{{\mathcal Z}}
\let\hat\widehat
\def\E{{\mathbb E}}
\def\Pr{{\mathbb P}}
\def\cC{{\cal C}}
\newcommand{\be}{\begin{equation}}
\newcommand{\ee}{\end{equation}}
\newcommand{\bea}{\begin{eqnarray}}
\newcommand{\eea}{\end{eqnarray}}
\newcommand{\eol}{\end{enumerate}\setlength{\itemsep}{-\parsep}}
\newcommand{\eg}{\emph{e.g.}\xspace}
\newcommand{\ie}{\emph{i.e.}\xspace}
\def\argmin{\mathop{\mathrm{arg\,min}}}
\newcommand{\paraspace}{\vspace{0.02in}}
\newcommand{\parab}[1]{\paraspace\noindent{\textbf{#1}}}
\newcommand{\dataem}{\textsc{Movie}}
\newcommand{\dataar}{\textsc{Rating}}
\newcommand{\datagh}{\textsc{GitHub}}
\newcommand{\simadapt}{\textsc{SimAdapt}}
\newcommand{\simfixed}{\textsc{SimFixed}}
\newcommand{\simunif}{\textsc{SimUnif}}
\newcommand{\prsh}{\textsc{PBA}}
\newcommand{\gps}{\textsc{GPS}}
\def\var{\mathop{\mathrm{Var}}}
\def\correl{\mathop{\mathrm{Cor}}}
\def\argmin{\mathop{\mathrm{arg\,min}}}
\newcommand{\WRE}{\textsc{wre}}
\newcommand{\pba}{\textsc{PBA}}
\date{}
\begin{document}
\setlength{\parskip}{0pt plus 1pt minus 1pt}

\title{Sampling for Approximate Bipartite Network Projection}

\author{
\begin{tabular}{ccc}
Nesreen K. Ahmed&Nick Duffield&Liangzhen Xia
\cr
\normalfont Intel Labs, CA&\normalfont Texas A\&M University&\normalfont 
Texas A\&M University
\cr
\normalfont nesreen.k.ahmed@intel.com&\normalfont duffieldng@tamu.edu&\normalfont xialiangzhen1226@gmail.com
\end{tabular}
}
\maketitle

\begin{abstract}
Bipartite networks manifest as a stream of edges that
represent transactions, e.g., purchases by retail customers. Many machine learning applications employ neighborhood-based measures to characterize the similarity among the nodes, such as
the pairwise number of common neighbors (CN) and related
metrics. While the number of node pairs that share neighbors is
potentially enormous, only a relatively small proportion
of them have many common neighbors. This
motivates finding a weighted sampling approach to preferentially
sample these node pairs. This paper presents a new sampling
algorithm that provides a fixed size unbiased estimate of the
similarity
matrix resulting from a bipartite graph stream projection. The algorithm has two
components. First, it maintains a 
reservoir of sampled bipartite edges with sampling weights that
favor selection of high similarity nodes. Second,
arriving edges generate a stream of \textsl{similarity updates}
based on their adjacency with the current sample. These updates are
aggregated in a second reservoir sample-based stream aggregator to
yield the final unbiased estimate. Experiments on real
world graphs show that a 10\% sample at each stage yields
estimates of high similarity edges with weighted relative
errors of about $10^{-2}$.
\end{abstract}

\section{Introduction}\label{sec:intro}

Networks arise as a natural representation for data, where nodes
represent people/objects and edges represent the relationships among
them. The recent years have witnessed a tremendous amount of research
devoted to the analysis and modeling of complex
networks~\cite{liben2007link}. Bipartite networks are a special class
of networks represented as a graph $G = (U, V, K)$, whose nodes
divide into two sets $U$ and $V$, with edges allowed only between
two nodes that belong to different sets, \ie, $(u,v) \in K$ is an
edge, only if $u \in U$ and $v \in V$. Thus, bipartite networks
represent relationships between two different types of nodes. 
Bipartite networks are a natural model for many systems and
applications. For example, bipartite networks are used to model the relationships between users/customers and the products/services they consume. General examples include collaboration networks in which actors are connected by a common collaboration act (\eg, author-paper, actor-movie) and opinion networks in which users are connected by shared objects (\eg, user-product, user-movie, reader-book). Clearly, a bipartite network manifests as a \emph{stream of edges} representing the transactions between two types of nodes over time, \eg, retail customers purchasing products daily. Moreover, these dynamic bipartite networks are usually large, due to the prolific amount of activity carrying a wealth of useful behavioral data for business analytics.

While the bipartite representation is indeed useful by itself, many
applications focus on analyzing the relationships among a particular
set of nodes~\cite{zhou2007bipartite}. For the convenience of these
applications, a bipartite network is usually compressed by using a
one-mode projection (\ie, projection on one set of the nodes), this is
called \emph{bipartite network projection}. For example, for a one-mode projection on $U$, the projected graph will contain only $U$-nodes and two nodes $u, u' \in U$ are connected if there is at least one common neighbor $v \in V$, such that $(u,v) \in K$ and $(u',v) \in K$. This results in the $U$-projection graph $G_U = (U, K_U, C)$ which is a weighted graph characterized by the set of nodes $U$, and the edges among them $K_U$. The matrix $C = \{C{(u,u')}\}_{U \times U}$ represents the weighted adjacency matrix for the $U$-projection graph, where the weight $C{(u,u')}$ represents the strength of the similarity between the two nodes $u, u' \in  U$.

How to weight the edges has been a key question in one-mode projections and their applications. Several weighting functions were proposed. For example, neighborhood-based methods~\cite{Ning2015,zhou2007bipartite} measure the similarity between two nodes proportional to the overlap of their neighbor sets. Another example in~\cite{fouss2007random} uses random walks to measure the similarity between nodes. Finding similar nodes (\eg, users, objects, items) in a graph is a fundamental problem with applications in recommender systems~\cite{koren2008factorization}, collaborative filtering~\cite{herlocker2004evaluating}, social link prediction~\cite{liben2007link}, text analysis~\cite{salton1993approaches}, among others.

Motivated by these applications, we study the bipartite network projection problem in the \emph{streaming} computational model~\cite{muthu}. Thus, given a bipartite network whose edges arrive as a stream in some arbitrary order, we compute the projection graph (\ie, weighted matrix $C$) as the stream is progressing. In this paper, we focus on the \emph{common neighbors} approach as the weight function. The common neighbors weight is defined for any two nodes $u, u' \in U$, as the size of the intersection of their neighborhood sets $\Gamma(u)$ and $\Gamma(u')$, where $\Gamma(u) \subset V$ is the set of neighbors of $u$. Thus, their projected weight is $C(u,u')= |\Gamma(u) \cap \Gamma(u')|$. It is convenient to think of a bipartite network as a (binary) matrix $A \in \mathbb{R}^{U \times V}$, where the rows represent the node set $U$, and the columns represent the node set $V$. In this case, computing the $U$-projection matrix using common neighbors is equivalent to $C = A A^\intercal$, where $C \in \mathbb{R}^{U \times U}$. In addition, the common neighbors is a fundamental component in many weighing functions (\eg, cosine similarity), such as those used in collaborative filtering.  

The naive solution for this problem is to compute $C = A A^\intercal$
exhaustively with $\mathcal{O}(|U|^2)$ for space and time
complexity. However, this is unfeasible for streaming/large bipartite
networks~\cite{muthu,AhmedTKDD}. Instead, given a streaming bipartite
network (whose edges arrive over time), our goal is to compute a
sample of the projection graph that contains an unbiased estimate of the \emph{largest entries} in the projection matrix $C$.   

\textbf{Contributions}. Our main contribution is a novel single-pass,
adaptive, weighted sampling scheme in fixed storage for
approximate bipartite projection in streaming bipartite networks. Our approach has three steps. First, we maintain a weighted edge sample from the streaming bipartite graph. 
Second, we observe that the number of common neighbors $C(u,u')$ between two vertices $u$ and
$u' \in U$ is equal to the number of wedges $(u,v,u')$ connecting
them, where $(u, v) \in K$ and $(u', v) \in K$ for some $v \in
V$. Thus, each bipartite edge arriving to the sample generates
unbiased estimators of updates to the similarity matrix through the
wedges it creates. Third, a further sample-based aggregation
accumulates estimates of the projection graph in fixed-size storage. 

\section{Framework}\label{sec-framework}

\textbf{Problem Definition and Key Intuition}. Let $G=(U,V,K)$ be a bipartite graph, and $\Gamma(u)=\{v: (u,v)\in K\}$ denote the set of neighbors of $u \in U$. We study the problem of \emph{bipartite network projection} in data streams, where $G$ is compressed by using a one-mode projection. Thus, for a one-mode projection on $U$, the \emph{projected graph} will contain only $U$-nodes and two nodes $u, u' \in U$ are connected if there is at least one common neighbor $v \in V$, such that $(u,v) \in K$ and $(u',v) \in K$. This results in the $U$-projection graph $G_U = (U, K_U, C)$ which is a weighted graph characterized by the set of nodes $U$, and the edges among them $K_U$. The matrix $C = \{C(u,u')\}_{U \times U}$ represents the weighted adjacency matrix for the $U$-projection graph, where the weight $C(u,u')$ represents the strength of the similarity between any two nodes $u, u' \in  U$. In this paper, we propose a novel approximation framework based on sampling to avoid the direct computation of all pairs in $C$.

\begin{mydef}[\textsc{Approximate Bipartite Projection}]\label{def1}  
Given a bipartite network $G = (U, V, K)$ with (binary) adjacency
matrix  $A \in \mathbb{R}^{U \times V}$: find the vertex pair
$(u,u')\in U\times U$ that maximizes $C = A A^\intercal$. More generally, assume a given parameter $k$, find the $k$ vertex pairs $\{(u_1,u'_1), \dots, (u_k,u'_k)\}$ corresponding to the $k$ largest entries in $C$.
\end{mydef}

Note that Definition~\ref{def1} corresponds to finding the pairs with largest number of common neighbors. Intuitively, the number of common neighbors $C(u,u')$ between two vertices $u,u' \in U$, is equivalent to the number of wedges $(u,v,u')$ connecting them, where $(u, v) \in K$ and $(u', v) \in K$ for some $v \in V$. 

\smallskip\noindent
\textbf{Streaming Bipartite Network Projection}. Bipartite networks
are used to model dynamically evolving transactions
represented as a stream of edges between two types of nodes over
time. In the \emph{streaming bipartite graph} model, edges $K$ arrive
in some arbitrary order $\{e_i:i\in[|K|]\}$. Let $K_t=\{e_i:i\in
[t]\}$ denote the first $t$ arriving edges, $G_t=(U_t,V_t,K_t)$ the
bipartite graph induced by the first $t$ arriving edges, and $C_t$ the
corresponding similarity matrix. We aim to estimate the
largest entries of $C_t$ for any $t$. 

\subsection{Adaptive Bipartite Graph Sampling}  
We construct a weighted fixed-size reservoir sample of bipartite edges in which edge weights dynamically adapt to their topological importance (\ie, priority). For a reservoir of size $m$, we admit the first $m$ edges, while for $t>m$, the sample set comprises a subset $\hat K_t \le K_t$ of the first $t$ arriving edges, with fixed size
$|\hat K_t|=m $ for each $t\ge m$. This is achieved by provisionally admitting the arriving edge at each $t>m$ to the reservoir, then discarding one of the resulting $m+1$ edges by the random mechanism that we now describe.

Since edges are assumed unique, each edge $e_i$ is identified with its the arrival order $i\in[|K|]$. 
All sampling outcomes are determined by independent random variables
$\beta_i$, uniformly distributed in $(0,1]$, assigned to each edge $i$ on arrival. Any edge present in the
sample at time $t\ge i$ possess a weight $w_{i,t}$ whose  form
is described in Section~\ref{sec:weights}. The \textbf{priority} of
$i$ at time $t$ is defined as $r_{i,t}=w_{i,t}/\beta_i$. Edge $i$ is
provisionally admitted to the reservoir forming the set $\hat
K'_{i}=\hat K_{i-1}\cup\{i\}$, from which we then discard the edge
with minimal priority, whose value is called the \textbf{threshold}.

Theorem~\ref{thm:unb} below establishes unbiased estimators
of edge counts. Define the \textbf{edge indicator} $S_{i,t}$ taking the value $1$ if
$t\ge i$ and $0$ otherwise. We will construct inverse probability \textbf{edge
  estimators} 
$\hat S_{i,t}=I(i\in\hat K_t)/q_{i,t}$ of $S_{i,t}$ and prove they are 
unbiased. This entails showing that $q_{i,t}=\min\{1, \min_{i\le s\le
  t}w_{i,s}/z_s\}$ is the probability that $i\in K_t$, conditional on
the set $\cZ_{i,t}=\{z_i,\ldots,z_t\}$ of thresholds $z_s=\min_{j\in
  \hat K'_s}r_{j,s}$ since its arrival.

\begin{theorem}\label{thm:unb}
$\hat S_{i,t}$ is an unbiased
estimator of $S_{i,t}$.
\end{theorem}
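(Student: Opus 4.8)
The plan is to isolate the single random variable $\beta_i$ that controls the fate of edge $i$, and to show that once we condition on the thresholds $\cZ_{i,t}$ the event $\{i\in\hat K_t\}$ is exactly a one-sided threshold event for the uniform variable $\beta_i$. First I would observe that, because a discarded edge never re-enters the reservoir, $i\in\hat K_t$ holds if and only if $i$ is never the discarded (minimum-priority) edge at any discard step $s$ with $i\le s\le t$; equivalently $r_{i,s}>z_s$ for every such $s$. Writing $r_{i,s}=w_{i,s}/\beta_i$, this is the same as $\beta_i<w_{i,s}/z_s$ for all $s$, i.e. $\beta_i<\min_{i\le s\le t}w_{i,s}/z_s$. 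Since $\beta_i$ is uniform on $(0,1]$, if this bound were independent of $\beta_i$ the conditional survival probability would immediately be $\min\{1,\min_{i\le s\le t}w_{i,s}/z_s\}=q_{i,t}$, and unbiasedness would follow.

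The hard part is precisely that independence: the thresholds $z_s$ are minima over the provisional reservoir $\hat K'_s$, which contains $i$ while it is still alive and whose composition can itself depend on $\beta_i$ through earlier discard decisions. To break this circularity I would introduce an $i$-protected process that runs identically but is forbidden from ever discarding $i$, so that at each step it discards the minimum-priority edge among $\hat K'_s\setminus\{i\}$; let $z_s^{(i)}$ denote its thresholds. By induction on $s$ one shows that the set of non-$i$ edges retained by the protected process, and hence each $z_s^{(i)}$, is a function of $\{\beta_j:j\ne i\}$ and the weights alone, so $z_s^{(i)}$ is independent of $\beta_i$. The second ingredient is a coupling observation: on the event that $i$ actually survives in the real process, $i$ is never the minimum, so the real and protected processes make identical discard decisions and $z_s=z_s^{(i)}$ for all $i\le s\le t$. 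Consequently $\{i\in\hat K_t\}=\{\beta_i<\min_{s}w_{i,s}/z_s^{(i)}\}$, an event whose defining bound is measurable with respect to $\{\beta_j:j\ne i\}$.

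With this in hand the remainder is routine. Conditioning on the protected thresholds $z_s^{(i)}$ (which coincide with $\cZ_{i,t}$ on the survival event), $\beta_i$ is still uniform on $(0,1]$, so $\Pr[i\in\hat K_t\mid\cZ_{i,t}]=\min\{1,\min_{i\le s\le t}w_{i,s}/z_s\}=q_{i,t}$, the identity flagged before the theorem. Then for $t\ge i$,
\[
\E[\hat S_{i,t}\mid\cZ_{i,t}]=\E\!\left[\frac{I(i\in\hat K_t)}{q_{i,t}}\,\Big|\,\cZ_{i,t}\right]=\frac{q_{i,t}}{q_{i,t}}=1=S_{i,t},
\]
and the tower property gives $\E[\hat S_{i,t}]=S_{i,t}$; for $t<i$ both sides vanish. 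The only point demanding genuine care is the protected-process/coupling argument of the second paragraph—verifying that $z_s^{(i)}$ truly carries no information about $\beta_i$ and that it agrees with $z_s$ on the survival event—everything else is bookkeeping. A minor edge case worth stating explicitly is the reservoir-filling phase $t\le m$, where no discards occur, the minimum is taken over an empty index set, and $q_{i,t}=1$ as it should be.
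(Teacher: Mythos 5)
Your proposal is correct and follows essentially the same route as the paper: your protected thresholds $z_s^{(i)}$ are exactly the paper's $z_{i,s}=\min_{j\in\hat K'_s\setminus\{i\}}r_{j,s}$, and your coupling observation (the real and protected thresholds coincide on the survival event) is exactly the paper's remark that $\tilde q_{i,t}=q_{i,t}$ when $i\in\hat K_t$. Your explicit induction showing $z_s^{(i)}$ is measurable with respect to $\{\beta_j:j\ne i\}$ merely spells out the independence the paper uses implicitly when asserting $\Pr[i\in\hat K_t\mid\cZ_{i,t}]=\tilde q_{i,t}$.
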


\begin{proof} Trivially $\hat S_{i,t}=0=S_{i,t}$ for $t<i$.
For $t\ge i$ let $z_{i,t}=\min_{j\in\hat K_t\setminus\{i\}}r_{j,t}$. 
Observe $i\in\hat K_t$ iff $r_{i,s}$ is not the smallest priority
in any $\hat K'_s$ for all $s\in[i,t]$. In other words
\be \{i\in \hat K_t\}
=\cap_{s\in [i,t]}\{\frac{w_{i,s}}{\beta_i}>z_{i,s}\} \\ 
=\{\beta_i<\min_{s\in[i,t]}\frac{w_{i,s}}{z_{i,s}}\}\nonumber
\ee
Thus $\Pr[i\in \hat K_t| \cZ_{i,t}] = \tilde
q_{i,t}:=\min\{1,\min_{s\in[i,t]} w_{i,s}/z_{i,s}\}$. Note that $\tilde
  q_{i,t} = q_{i,t}$ when $i\in \hat K_t$ since then $z_{i,s}=z_s$ for
  all $s\in[i,t]$. Hence when $t\ge i$
\be
\E[\hat S_{i,t}|\cZ_{i,t}] 
=
{\Pr[I(i\in \hat K_t)|\cZ_{i,t}]}/{\tilde q_{i,t}} =1=S_{i,t}
\ee
independent of $\cZ_{i,t}$ and hence $\E[\hat
S_{i,t}]=S_{i,t}$. 
\end{proof}

\noindent 
Let $z^*_t=\max\cZ_{m,t}$. Theorem~\ref{thm:nondec} shows that
$p_{i,t}=\min\{1, \min_{i\le s\le t}\frac{w_{i,s}}{z^*_s}\}$ can be used in place of $q_{i,t}$. This simplifies
computation since: (a) each $p_{i,t}$ uses the same $z^*_t$; (b) updates
of $p_{i,t}$ can be deferred until times $t$ at which $w_{i,t}$ increases.

\begin{theorem}\label{thm:nondec}
If $t\to w_{i,t}$ is non-decreasing for each $i$ then $q_{i,t}=p_{i,t}$
and hence $\hat S_{i,t}=I(i\in \hat K_t)/p_{i,t}$ for all $t\ge i$.
\end{theorem}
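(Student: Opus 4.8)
The plan is to prove the two inequalities $p_{i,t}\le q_{i,t}$ and $q_{i,t}\le p_{i,t}$ separately, and it suffices to verify the identity on the event $\{i\in\hat K_t\}$: the estimator carries the factor $I(i\in\hat K_t)$, so off this event both candidate estimators vanish and the value of the denominator is irrelevant. The first inequality is immediate. By definition $z^*_s=\max_{m\le s'\le s}z_{s'}\ge z_s$, so $w_{i,s}/z^*_s\le w_{i,s}/z_s$ termwise; taking the minimum over $s\in[i,t]$ and applying the cap at $1$ gives $p_{i,t}\le q_{i,t}$.

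For the reverse inequality the crucial step is a monotonicity lemma for the reservoir. Writing $\mu_s=\min_{j\in\hat K_s}r_{j,s}$ for the smallest retained priority at time $s$, I would show that $s\mapsto\mu_s$ is non-decreasing. The argument is a one-step comparison: any edge $j$ retained at time $s$ either survived from $s-1$, in which case $r_{j,s}=w_{j,s}/\beta_j\ge w_{j,s-1}/\beta_j=r_{j,s-1}\ge\mu_{s-1}$ by the non-decreasing weight hypothesis, or is the newly arrived edge $s$, which on entering must exceed the discarded edge $d_s\in\hat K_{s-1}$, whose priority already satisfies $r_{d_s,s}\ge r_{d_s,s-1}\ge\mu_{s-1}$; in both cases $r_{j,s}\ge\mu_{s-1}$, hence $\mu_s\ge\mu_{s-1}$. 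This lemma is exactly where the hypothesis is used, and it is the main obstacle: without monotonicity the running maximum $z^*_s$ could be attained at a threshold from before $i$ arrived, and $p_{i,t}$ could then be strictly smaller than $q_{i,t}$.

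With the lemma in hand I would dispose of the pre-arrival thresholds. Since the discarded priority is strictly below the retained minimum, $z_{s'}<\mu_{s'}\le\mu_{i-1}$ for every $s'\le i-1$; and on $\{i\in\hat K_t\}$ edge $i$ survives its own arrival round, so it is not the provisional minimum and $z_i=\min_{j\in\hat K_{i-1}}r_{j,i}\ge\mu_{i-1}$. Therefore $z_i\ge z_{s'}$ for all $m\le s'\le i-1$, which gives $z^*_i=z_i$ and, consequently, $z^*_s=\max_{i\le s'\le s}z_{s'}$ for every $s\ge i$: the running maximum relevant to $i$ never sees a pre-$i$ threshold.

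Finally I would combine this with monotonicity of the weights. Using $z^*_s=\max_{i\le s'\le s}z_{s'}$ I can rewrite $w_{i,s}/z^*_s=\min_{i\le s'\le s}w_{i,s}/z_{s'}$, and for each such $s'\le s$ the bound $w_{i,s}\ge w_{i,s'}$ yields $w_{i,s}/z_{s'}\ge w_{i,s'}/z_{s'}\ge\min_{i\le u\le t}w_{i,u}/z_u$. Taking the minimum over $s$ shows $\min_{i\le s\le t}w_{i,s}/z^*_s\ge\min_{i\le s\le t}w_{i,s}/z_s$, i.e. $q_{i,t}\le p_{i,t}$. Together with the first inequality this gives $q_{i,t}=p_{i,t}$ on $\{i\in\hat K_t\}$, and the estimator identity $\hat S_{i,t}=I(i\in\hat K_t)/q_{i,t}$ of Theorem~\ref{thm:unb} becomes $\hat S_{i,t}=I(i\in\hat K_t)/p_{i,t}$.
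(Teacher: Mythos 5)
Your proof is correct, and while it rests on the same central fact as the paper's argument, it executes it by a genuinely different route. Both proofs turn on showing that once edge $i$ is admitted, its admission threshold dominates all earlier ones, so that $z^*_s=\max_{i\le s'\le s}z_{s'}$ for $s\ge i$. You obtain this from a clean invariant: the retained minimum priority $\mu_s=\min_{j\in\hat K_s}r_{j,s}$ is non-decreasing, giving $z_{s'}\le\mu_{s'}\le\mu_{i-1}\le z_i$ for $s'<i$. The paper instead chases the chain of discarded edges backwards in time ($z_i=w_{d_i,i}/\beta_{d_i}>z_s$ throughout the tenure $[d_i,i-1]$ of the discarded edge $d_i$, then ``iterating the argument'' through successive discards); this carries the same content but is less transparent, and your lemma isolates exactly where the monotonicity hypothesis enters. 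After this point the two proofs diverge structurally: the paper completes the argument by forward induction on $t$, splitting on whether $z_{t+1}$ exceeds the running maximum $z^*_t$ --- an organization that mirrors the algorithm's deferred updates of $p_{i,t}$ --- whereas you finish non-inductively with a two-sided comparison: the easy direction $p_{i,t}\le q_{i,t}$ from $z^*_s\ge z_s$, and the hard direction by rewriting $w_{i,s}/z^*_s=\min_{i\le s'\le s}w_{i,s}/z_{s'}$ and applying $w_{i,s}\ge w_{i,s'}$. You are also more explicit than the paper that the identity is needed, and proved, only on the event $\{i\in\hat K_t\}$; the paper's phrase ``by assumption, $i$ is admitted to $\hat K_i$'' is the same restriction made implicitly. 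Two small points to tidy: your monotonicity lemma holds only once the reservoir is full ($\mu_s$ can decrease while $s\le m$, since arriving edges are then admitted unconditionally), which is precisely the range where thresholds are defined, so nothing breaks; and ties in priority occur with probability zero, so the alternation between strict and weak inequalities needs no special care.
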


\begin{proof}
Let $d_t$ denote the edge discarded during processing arrival
$t$. By assumption,
$i$ is admitted to $\hat K_i$ and since $w_{j,t}$ is non-decreasing in
$t$, $z_i = z_{i,i}=
w_{d_i,i}/\beta_{d_i}\ge w_{d_i,s}/\beta_{d_i}>z_s$ for all $s\in[d_i,i]$ in order that $d_i\in
K_s$ for all $s\in[d_i,i-1]$. Iterating the argument we obtain that
$z_i\ge \cZ_{m,i}$ and hence $z_i=z^*_i$ and $p_{i,i}=q_{i,i}$. The
argument is completed by induction. Assume $p_{i,s}=q_{i,s}$ for
$s>i$. If in addition $z_{t+1}>z_t^*$,  then $z^*_{t+1}=z_{t+1}$ and
hence $p_{i,s+1}=q_{i,s+1}$. If $z_{s+1}\le z^*_s$ then
$z_s=z^*_{s+1}$ and hence $w_{i,s+1}/z_{s+1}\ge w_{i,s+1}/z^*_{s+1}\ge
w_{i,s}/z^*_{s+1}=w_{i,s}/z^*_s$. Thus we replace $z_{s}$ by
$z^*_{s+1}$ in the definition of $q_{i,s+1}$ but use of either leaves
its value unchanged, since by hypothesis both exceed
$q_{i,s}\le w_{i,s}/z^*_{i,s}$.
\end{proof}

\subsection{Edge Sampling Weights}\label{sec:weights} We now specify the weights used for edge
selection. The total similarity of node $u\in U$
is $\cC(u)=\sum_{u'\in U}C(u,u')=\sum_{v\in
  \Gamma(u)} (|\Gamma(v)|-1)$. Thus, the effective contributions of an edge $(u,v)$ to the
total similarities $\cC(u)$ and $\cC(v)$ are $|\Gamma(v)|-1$ and $|\Gamma(u)|-1$
respectively.
This relation indicates that if we wish to sample nodes $u\in U, v\in
V$ with high total similarities $\cC(u)$ and $\cC(v)$ as vertices in the
edge sample, we should sample nodes with high degrees $|\Gamma(u)|$ and
$|\Gamma(v)|$. For adaptive sampling, an edge $e=(u,v)\in \hat K'_t$ has weight
{
\be
w(u,v)=|\hat \Gamma_t(u)| + |\hat \Gamma_t(v)|
\ee
}
where $\hat \Gamma_t(u),\ \hat \Gamma_t(v)$ are the neighbor sets of $u,\ v$
in the graph $\hat G'_t$ induced by $\hat K'_t$. We also consider a
non-adaptive variant in which edges weights are computed on arrival as above,
but remain fixed thereafter. 

\subsection{Unbiased Estimation of Similarity Weights}
Consider first generating the exact similarity $C_t$ from the
truncated stream $K_t$. Each arriving edge $e_i=(u,v),\ i\le t$ contributes to
$C_t(u, u')$ through wedges $(u,v,u')$ for $v\in \Gamma_t(u)\cap \Gamma_t(u')$. Thus 
to compute $C_t(u,u')$ we count the number of such wedges occurring up to
time $t$, \ie,  
{\small
\bea
C_t(u,u')&=&\sum_{i=1}^t \sum_{v\in \Gamma_i(u)\cap \Gamma_i(u')}
\left( I( u(e_i) = u))S_{(u',v),i-1} \right.\nonumber \\
&&+  \left. I( u(e_i) =
  u'))S_{(u,v),i-1}\right)
\eea
\normalsize}
where $u(e)$ denotes the initial node of edge $e$.
By linearity, we obtain an unbiased estimate $\hat C_t$ of $C_t$ by replacing
each $S_{(u,v),i-1}$ by its unbiased estimate
$\hat S_{(u,v),i-1}$.
Each arriving edge
$e_i=(u,v)$ generates an increment to $\hat C_t(u,u')$ for all edges
$(u',v)$ in $\hat K_i$, the increment size being the
corresponding value of $\hat S_{(u',v),i-1}$, namely,
$1/p_{(u',v),i-1}$. 

\begin{algorithm2e}[t]
  \caption{Adaptive Sampling for Bipartite Projection}
  \label{alg5}
\begin{spacing}{0.9}
\fontsize{8}{9}\selectfont
\SetKwFunction{simest}{SimAdapt}
\SetKwFunction{simquery}{SimQuery}
\SetKwFunction{update}{UpdateEdge}
\SetKwFunction{weight}{$w$}
\SetKwFunction{nn}{$N$}
\SetKwFunction{random}{$\beta$}
\SetKwFunction{insertedge}{InsertEdge}
\SetKwFunction{deleteedge}{DeleteEdge}
\SetKwFunction{add}{Add}
\SetKwFunction{open}{Initialize}
\SetKwFunction{close}{Query}
 \SetKwFunction{Aggregate}{Aggregate}
  \SetKwProg{myproc}{Procedure}{}{end}
 \SetKwProg{myfunc}{Function}{}{end}
\SetKwProg{myclass}{Class}{}{end}
\SetKw{mymethod}{Method}
\SetProcNameSty{textsc}
\SetCommentSty{textsl}
\SetFuncSty{textsc}
\SetKwComment{myc}{$\triangleright$\ }{}
\KwIn{Stream of Bipartite Graph Edges in $U\times V$ \; Edge Sample Size $m$; Similiarity
  Sample Size $n$}
\KwOut{Sample Similarity Edges $\hat K_U,\hat K_V$; 
Estimate $\hat C$}
 \myproc{\simest{$m,n$}}{
 $K=\emptyset$; 
  $z^*=0$ \;
  \Aggregate.\open($n$) \;
  \While{(new edge $(u,v)$)}{
    \ForEach{($u'\in
      \Gamma(v)$)}{
      \update{$(u',v),z^*$} \label{alg:l2} \;
      \Aggregate.\add$( (u',u), 1/p(u',v))$ \label{alg:l4} 
      }
    \ForEach{($v'\in
      \Gamma(u)$)}{
      \update{$(u,v'),z^*$} \label{alg:l3} \;
      \Aggregate.\add$((v',v),1/p(u,v'))$ \label{alg:l5} 
}
      \uIf{$|K|< m$}{  
      \insertedge$(u,v)$ \label{alg:l6}\;
    }
     \uElseIf{$\weight(u,v) / \random(u,v) < \min_{(u',v')\in K}\weight(u',v')/\random(u',v')$}{
       $z^* = \max\{z^*,\weight(u,v) / \random(u,v)\}$ \label{alg:l8}
      }
     \Else{
     \insertedge{$(u,v)$} \label{alg:rep1}\;
    $(u^*,v^*)=\argmin_{(u',v')\in K}\weight(u',v')/\random(u',v')$ \;
    $z^*=\max\{z^*,\weight(u^*,v^*)/\random(u^*,v^*)\}$ \label{alg:lmax}\;
    \deleteedge{$(u^*,v^*)$} \label{alg:l9} 
}
  }
}
  \myproc{\update{$(\tilde u,\tilde v), \tilde z$}}{
    \If{$(\tilde z > 0)$} {
      $p(\tilde u,\tilde v)=\min\{p(\tilde u,\tilde v),\weight(\tilde
      u,\tilde v)/\tilde z\}$ \label{alg:updatedef}
    }
  }
 \myproc{\insertedge{$\tilde u, \tilde v$}}{
\ForEach{($u''\in
      \Gamma(\tilde v)$)}{
   \update{$(u'',\tilde v),z^*$} \label{alg:l20} ;
$\weight(u'',\tilde
      v)\mathrel{+}\mathrel{+}$ \label{alg:l10}
}\label{alg:l24}
  \ForEach{($v''\in
     \Gamma(\tilde u)$)}{
\update{$(\tilde u,v''),z^*$} \label{alg:l21}  ;
$\weight(\tilde u,v'')\mathrel{+}\mathrel{+}$ \label{alg:l11}
}
$K=K\cup\{(\tilde u, \tilde v)\};
\weight(\tilde u, \tilde v)=|\Gamma(\tilde u)| + |\Gamma(\tilde v)|; 
p(\tilde u, \tilde v)=1$ \label{alg:l7} %
}
 \myproc{\deleteedge{$\tilde u, \tilde v$}}{
    $K=K\setminus\{(\tilde u, \tilde v)\}$; 
    Delete $p(\tilde u, \tilde
    v)$ 
  }
\hrule
\myproc{\simquery{}}{
$(\hat K_U \cup \hat K_V, \hat C) = $  \Aggregate.\close() 
}
\end{spacing}
\end{algorithm2e}

\subsection{Aggregation of Similarity Updates}
\label{sec:streamsim}
The above construction recasts the problem of reconstituting the sums $\{\hat C_ t(u,u'):
(u,u')\in U\times U\}$ as
the problem of aggregating the stream of key-value pairs
\[
{\small
\left\{\left((u,u'),p^{-1}_{(u',v),i-1}\right): i\in[t],\ e_i=(u,v),\ (u',v)\in \hat
K_i\right\}
\normalsize}
\]
Exact aggregation would entail allocating storage for every
key $(u,u')$ in the stream. Instead, we use weighted sample-based
aggregation to provide 
unbiased estimates of the $\hat C_t(u,u')$ in fixed storage. 
Specific aggregation algorithms with this property include Adaptive Sample \& 
Hold \cite{EV:02short}, Stream VarOpt 
\cite{Cohen:2011:ESS:2079108.2079117} and Priority-Based Aggregation
(\pba) 
\cite{DuffieldCIKM2017}. Each of these schemes is
weighted, inclusion of new items having probability proportional
to the size $1/p_{(u,v),i-1}$ of an update.
Weighted sketch-based methods such as $L_p$
sampling
\cite{6108197,MoWo:SODA2010}
could also be used, 
but with space factors that grow
polylogarithmically in the inverse of the bias, they are less able to take
advantage of smoothing from aggregation.

\parab{Estimation Variance.} Inverse probability estimators \cite{HT52}
like those in Theorem~\ref{thm:unb} furnish unbiased variance estimators
computed directly from the
estimators themselves; see \cite{Tille:book}. 
For $\hat S_{i,t}$ this takes the form $\hat V_{i,t} =
\hat S_{i,t}(p_{i,t}^{-1} -1)$, with the unbiasedness property
$\E[\hat V_{i,t}]=\var(\hat S_{i,t})$. These have performed well in
graph stream applications \cite{Ahmed:2017:SMG:3137628.3137651}.
The approach extends to the composite estimators with sample-based
aggregates, using variance
bounds and estimators established for the methods listed
above, combined via the Law of Total Variance.
Due to space limitations we omit the details.

\subsection{Algorithms}\label{sec:algo}
Alg.~\ref{alg5} defines \simadapt\
which implements Adaptive Priority Sampling for bipartite edges,
and generates and aggregates a stream of similarity updates.
It accepts two parameters:
$m$ the reservoir size for streaming bipartite edges, and $n$ the
reservoir size for similarity matrix estimates. Aggregation of 
similarity increments is signified by the class \textsc{Aggregate},
which has three methods: \textsc{Initialize}, which initializes
sampling in a reservoir of size $n$; \textsc{Add}, which aggregates a 
(key,value) update to the similarity estimate, and; \textsc{Query}, which returns
the estimate of the similarity graph at any point in the
stream. 
Each arriving edge generates  similarity updates for each adjacent edge as inverse probabilities (lines ~\ref{alg:l4}, \ref{alg:l5}). 

The bipartite edge sample is maintained in a priority queue $K$ based on
increasing order of edge priority, which for each $(u,v)$ is
computed as the quotient of the edge
weight $w(u,v)$ (the sum of the degrees of $u$ and $v$)
and a permanent random number $\beta(u,v)\in(0,1]$, generated on
demand as a hash of the unique edge
identifier $(u,v)$. The arriving edge is inserted (line \ref{alg:l6}))
if the current occupancy is less than $m$. Otherwise, if its priority is
less than the current minimum, it is discarded and the threshold $z^*$
updated (line~\ref{alg:l8}). If not, the arriving edge replaces the edge of
minimum priority (lines~\ref{alg:rep1}--\ref{alg:l9}).
Edge insertion increments the weights of each adjacent edge
(lines~\ref{alg:l10} and~\ref{alg:l11}). Since $w_{i,t}$ and
$z^*_{t}$ are non-decreasing in $t$, the update of $p_{i,t}$ (\ie, $p_{i,t}=\min\{p_{i,t-1},w_{i,t}/z^*_{t}\}$)
(line~\ref{alg:updatedef}) is deferred until
$w_{i,t}$ increases (lines~\ref{alg:l10},~\ref{alg:l11}) or $p_{i,t}$ is needed for a
similarity update (lines~\ref{alg:l2},~\ref{alg:l3}). 

A variant \simfixed\ uses
(non-adaptive) sampling for bipartite edge sampling with fixed weights. It is
obtained by modifying Algorithm~\ref{alg5} as follows. 
Since weights are not updated, the update and increment steps are
omitted (lines \ref{alg:l2} and \ref{alg:l24}--\ref{alg:l11}).
Edge probabilities are
computed on demand as $p(u,v)=\min\{1,w(u,v)/z^*\}$. We compare with \simunif, a variant
of \simfixed\ with unit weights.

\parab{Data Structure and Time Cost.} 
We implement the priority
queue as a \emph{min-heap}~\cite{Cormen:2001} where the root
position points to the edge with the lowest priority. Access to
the lowest priority edge is $O(1)$. Edge insertions are
$O(\log m)$ worst case.
In \simadapt, each insertion of an edge $(u,v)$ increments the weights
of its neighboring edges. Each weight increment may change 
its edge's priority, requiring its position in the priority queue to be
updated. The worst case cost for heap update is
$O(\log m)$. But since the priority is
incremented, the edge is bubbled down by exchanging with its lowest
priority child if that has lower priority. 

\parab{Space Cost.} 
The space requirement is $O(|\hat V| + |\hat U| + m + n)$, where $|\hat
U|+|\hat V|$ is the number of nodes in the reservoir, with $m$ and $n$
the capacities of the edge and similarity reservoirs.

\section{Evaluation}\label{sec-eval}

\begin{table}
{\small
\begin{center}
\begin{tabulary}{0.475\textwidth}{|l|CCCC|CC|}
\hline
&\multicolumn{4}{c|}{Bipartite}&\multicolumn{2}{c|}{Similarity}\\
dataset&$|U|+|V|$&$|K|$&$d_{\mathrm{max}}$&$d_{\textrm{avg}}$&$|K_U|$&$|R_U|$\\
\hline
\dataar&2M&6M&12K&5&204M&203\\
\dataem&62K&3M&33K&90&1.2M&6,797\\
\datagh&122K&440K&4K&7&22.3M&156\\
\hline
\end{tabulary}
\vspace{-2mm}
\caption{Datasets and characteristics. Bipartite graph:
$|U|+|V|$: $\#$nodes, $|K|$: $\#$edges, $d_{\mathrm{max}}$: max. degree,
$d_{\textrm{avg}}$: average degree. Similarity graph: $|K_U|$: $\#$edges
in source similarity, $|R_U|$ $\#$ dense ranks = $\#$distinct weights in source
similarity graph.}
\label{tab:data}
\end{center}
}
\vspace{-3mm}
\end{table}
\textbf{Datasets}. Our evaluations use three datasets comprising bipartite real-world graphs publicly available at
Network Repository~\cite{nr-aaai15}. Basic properties are listed in
Table~\ref{tab:data}.  In the bipartite graph $G=(U,V,K)$, $|U|+|V|$
is the number of nodes in both partitions, $|K|$ is the number of
edges, $d_{\mathrm{max}}$ and $d_{\textrm{avg}}$ are maximum and
average degrees. $|K_U|$ is the number of edges in the source
partition and $|R_U|$ the number of dense ranks, i.e. the number of
distinct similarity values. In \dataar\ (\textsl{rec-amazon-ratings})
an edge indicates a user rated a product; in \dataem\ 
(\textsl{rec-each-movie}) that a user reviewed a
movie, and in \datagh\ (\textsl{rec-github}) that a user is a
member of a project.
The experiments used a 64-bit desktop equipped with an Intel® Core™ i7 Processor with 4 cores running at 3.6 GHz.

\def\uniform{\textsc{simple}}
\def\cnhash{\textsc{CnHash}}

\noindent
\textbf{Accuracy Metrics}. Since applications such as recommendation
systems rank based on similarity, our metrics
focus on accuracy in determining higher similarities that
dominate recommendations with metrics that have been
used in the literature; see e.g., \cite{Gunawardana:2009:SAE:1577069.1755883}.

\textit{Dense Rankings and their Correlation}. We compare estimated and actual rankings of the similarities. We use
\textsl{dense ranking} in which edges with the same similarity have
the same rank, and rank values are consecutive. Dense ranking is
insensitive to permutations of equal similarity edges
 and reduces estimation noise. We use the integer part of the
 estimated similarity to reduce noise.
To assess the linear relationship between the actual and estimated ranks
we use Spearman's rank correlation on top-$k$
actual ranks. For each edge $e=(u,u')$ in the actual
similarity graph, let $r_e$ and $\hat r_e$ denote the dense
ranks of $C(u,u')$ and $\lfloor \hat C(u,u')\rfloor$.
$\correl(k)$ is the top-$k$ rank correlation, i.e., over
pairs $\{(r_e,\hat r_e): e\in K_{U,k}\}$ where $K_{U,k}= \{e\in K_U: r_e \le k\}$.

\textit{Weighted Relative Error.} 
We summarize relative errors 
by weighting by the actual
edge similarity, and for each $k$, we compute the top-$k$
\emph{weighted relative error} $\WRE(k)$ as,
\[
{\small
\left.{\sum_{(u,u')\in K_{U,k}} \left\vert\hat C(u,u')-C(u,u')\right\vert}\middle/  {\sum_{(u,u')\in
    K_{U,k}} C(u,u')}\right.
\normalsize}
\]

\noindent
\textbf{Baseline Methods}. We compare against two baseline methods. First,
\uniform\ takes a uniform sample of the bipartite edge stream,
and forms an unbiased estimate of $C(u,u')$ by $|\hat \Gamma(u)\cap \hat
\Gamma(u')|/p^2$ where $p$ is the bipartite edge sampling rate. Second, we
compare with sampling-based approach to link prediction in graph streams recently proposed in \cite{7498270},
which investigated several similarity metrics. We
use \cnhash\ to denote its common neighbor (CN) estimate adapted to the bipartite graph
setting. \cnhash\ uses a separate edge sample per node of the full
graph, sampling a fixed maximum reservoir size $L$ per node
using min-hashing to coordinate sampling across different nodes in to
order promote selection of common neighbors. Similarity estimates
are computed across node pairs. Unlike our methods, \cnhash\ does
not offer a fixed bound on the total edge sample size in the streaming
case because neither the number of nodes nor the distribution of
edges is known in advance. We attribute space costs for \cnhash\
using constant space per vertex property of the sketch described in
\cite{7498270}, and map this to an equivalent edge sampling rate
$f_m$, normalizing with the space-per-edge costs of each
method. For a sample aggregate size $n$, we apply
our metrics to the \cnhash\ similarity estimates of the top-$n$ true
similarity edges. 

\noindent
\textbf{Experimental Setup}. We applied \simadapt,
\simfixed\ and \simunif\ to each dataset, using
edge sample reservoir size $m$ a fraction $f_m$ of the total
edges, 
and sample aggregation reservoir size $n$ a
fraction $f_n$ of the edges of the actual similarity graph. 
\dataem\ and \datagh\ used $f_m\in
\{5\%, 10\%, 15\%, 20\%, 25\%, 30\%\}$. The \dataar\ 
achieved the same accuracy with smaller
sampling rates $\{1\%,5\%,10\%\}$. Second stage sampling
fractions were $f_n\in\{5\%, 10\%, 15\%, 100\%\}$, where 100\% 
is exact aggregation.

\section{ Results}\label{sec:eval:res}

\begin{table}
\begin{center}
\small 
\begin{tabular}{|l|l|cc|cc|}
\hline
&&\multicolumn{2}{c|}{\simfixed}&\multicolumn{2}{c|}{\simadapt}\\
dataset&metric&top-100&top-Max&top-100&top-Max\\
\hline
\dataar&\mbox{\WRE}&0.027&0.089&0.012&0.072\\
&\textsl{1-Cor}&\textsl{0.021}&\textsl{0.022}&\textsl{0.009}&\textsl{0.012}\\
\hline
\dataem&\mbox{\WRE}&0.006&0.122&0.002&0.135\\
&\textsl{1-Cor}&\textsl{0.004}&\textsl{0.018}&\textsl{0.001}&\textsl{0.025}\\
\hline
\datagh&\mbox{\WRE}&0.094&0.128&0.064&0.120\\
&\textsl{1-Cor}&\textsl{0.100}&\textsl{0.069}&\textsl{0.046}&\textsl{0.053}\\
\hline
\end{tabular}
\vspace{-2mm}
\caption{Performance of \simadapt, \simfixed\ with $f_m=10\%$
  edge sampling, $f_n=10\%$ \pba. \WRE\ and $1-\correl$.
  Max rank is \{200, 6400, 150\} for  \{\dataar, \dataem,
  \datagh\} }
\label{tab:compare}
\end{center}
\vspace{-3mm}
\end{table}

\parab{Comparison of Proposed Methods.} 
For \simadapt\ and \simfixed, Table~\ref{tab:compare} summarizes the
metrics \WRE\ and $1-\correl$ applied to $\{\dataar, \dataem, \datagh\}$
for both top-100 and maximal dense ranks of $\{200, 6400, 150\}$ 
respectively. The sampling rates are $f_m=10\%$ for bipartite
edges $f_n=10\%$ \pba\ for similarity edges. \simadapt\ performs
noticeably better for the  top-100 dense
ranks, with errors ranging from $0.1\%$ to $6.4\%$ representing
an error reduction of between 32\% and 73\%  relative
to \simfixed. The methods have similar accuracy up to maximal ranks.

\begin{figure}[h]
\begin{center}
\begin{tabular}{cc}
\includegraphics[width=1.5in,height=1.7in]{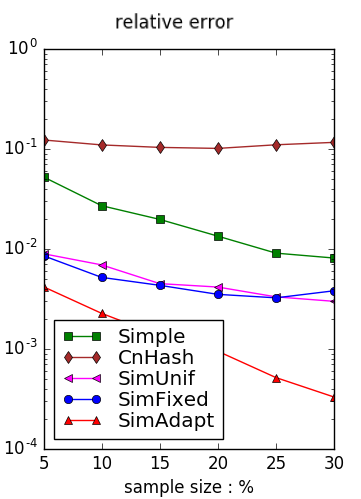}&
\includegraphics[width=1.45in,height=1.7in]{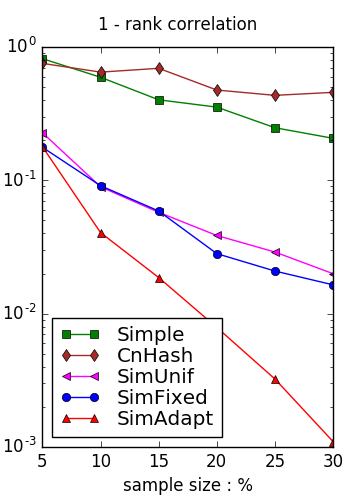}\\
\end{tabular}
\end{center}
\vspace{-5mm}
\caption{Dependence on bipartite edge sample rate $f_m$.
Left: \WRE\ on \dataem. Right: $1-\correl(k)$ on \datagh.
Top-100 dense ranks.}
\label{fig:bakeoff:all}
\end{figure}

\textsl{Accuracy and Bipartite Edge Sample Rate $f_m$.}
Figure~\ref{fig:bakeoff:all} shows metric dependence on edge sample
rate $f_m$ for top-100
dense ranks, using \WRE\ on \dataem\ (right) and $1-\correl$ on
\datagh\ (left). Each figure has
curves for \simadapt, \simfixed\ and \simunif\ (and baseline
methods \cnhash\ and \uniform\ discussed below).  We observe that 
\simadapt\ obtains up to an order of magnitude reduction in both metrics for
$f_m\ge 20\%$.

\begin{figure}[h]
\begin{center}
\begin{tabular}{cc}
\includegraphics[width=1.5in,height=1.6in]{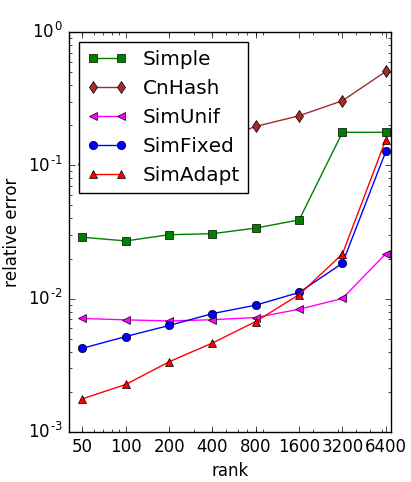}&
\includegraphics[width=1.475in,height=1.6in]{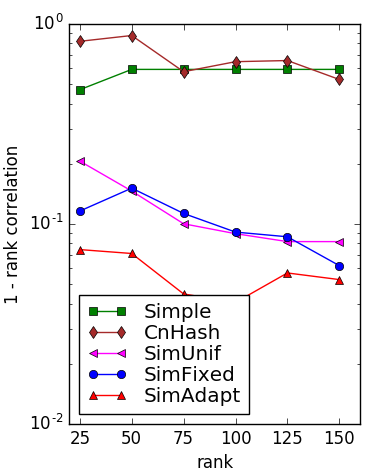}\\
\end{tabular}
\end{center}
\vspace{-5mm}
\caption{Dependence on rank.
Left: \WRE\ on  \dataem. Right: $1-\correl(k)$ on \datagh.
Top-$k$ dense ranks for $k$ up to top-Max.}
\label{fig:bakeoff:rank}
\end{figure}

\textsl{Accuracy and Similarity Rank.}
Figure~\ref{fig:bakeoff:rank} displays the same metric/data
combinations as Figure~\ref{fig:bakeoff:all} with $f_m=10\%$
for top-$k$ ranks
as a function of $k$.
As expected, \simadapt\ is most accurate for lower ranks
that it is designed to sample well, with \WRE\ $0.002$ for \dataem\ at
$k=50$ growing to about $0.2$ at maximum rank
considered. \simunif\ performed slightly better at high
ranks, we believe because it was directing relatively more resources
to high rank edges. 

\textsl{Accuracy and Aggregation Sampling Rate $f_n$.}
In all datasets the \prsh\ second stage had little effect
on accuracy for sampling rates $f_n$  down to about 10\% or less under a wide
variety of parameter settings. 
Figure~\ref{fig:prsh:ranks} shows results for \simfixed\
applied to \dataem\ at fraction $f_m=10\%$ and \prsh\ sampling rates of 5\% and
15\%, specifically \WRE\ and $1-\correl$ for the top-$k$ dense ranks, as a function of $k$. For $k$ up
to several hundred, even 5\% \prsh\ sampling has little or no
effect, while errors roughly double when nearly all ranks are
included. \simunif, and to a lesser extent \simfixed, 
exhibited more noise, even at \textsl{higher}
bipartite sampling rates $f_m$, which we attribute to a greater key diversity
of updates (being less concentrated on high
similarities) competing for space. Indeed, this noise
was absent with exact aggregation.

\begin{figure}
\begin{center}
\begin{tabular}{l}
\includegraphics[height=1.55in,width=3in]{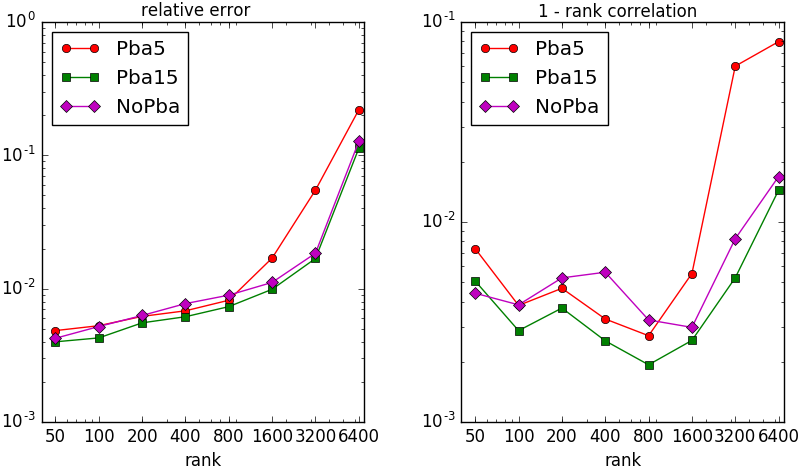}
\end{tabular}
\end{center}
\vspace{-5mm}
\caption{Sample based aggregation w/ \pba\ $f_n=5\% ,15\%$, and none.
Left: $\WRE$. Right: $1-\correl$, as function of top-$k$
ranks. Dataset \dataem\ with \simfixed\ $f_m=10\%$  bipartite edge sampling.}
\label{fig:prsh:ranks}
\vspace{-4mm}
\end{figure}

\parab{Baseline Comparisons.}
Figures~\ref{fig:bakeoff:all} and ~\ref{fig:bakeoff:rank} 
include metric curves for the baseline
methods \uniform\ and \cnhash. \simadapt\ and \simfixed\ typically
performed better than \uniform\ by at least an order of magnitude. In
some experiment with higher edge sample rate $f_m$, \simunif\ was less
accurate than \uniform, we believe due to the noise described above;
Our methods performed noticeably better than \cnhash\ in all cases, 
while \cnhash\ was often no better than \uniform. 
The reasons for this are two-fold. First, in the
streaming context, \cnhash\
does not make maximal use of its constant space per vertex for nodes
whose degree is less than maximum $L$. However,
\textsl{even counting only the stored edges,} \cnhash\ performs worse
than our methods for storage use equivalent to our edge sampling
rate $f_m < 15\%$. This second reason is the interaction of reservoir
design with graph properties. Using shared edge buffer, \simadapt\ and
\simfixed\ devote resources to high adjacency edges associated 
with high similarity in a sparse graph. Edges incident at high degree nodes are more likely
to acquire future adjacencies. 

Noise reduction was
employed for similarity estimates comprising
a small number of updates. These exhibit noise from 
inverse probability estimators without the benefit of
smoothing. We maintained an update count per edge and 
filtered estimates with count below a
threshold.
Most benefit
was obtained by filtering estimates of update count below
10; this was used in all experiments
reported above. Figure~\ref{fig:filter2}
compares the effects of no filtering
(top) with filtering at threshold 10 (right) applied  to \simadapt\
with $f_m=$ 10\% edge sample, for the
approximately 1,000 similarity edges in the top 100
estimated dense ranks. The left column shows actual and forecast weights.
Without filtering, noise in the estimated similarity curve
is due to a few
edges whose estimated similarity greatly exceeds
the actual similarity due to estimation noise, These are largely absent after
filtering. The right column shows a scatter of (actual, forecast)
ranks. Observe the cluster of edges with high actual rank (i.e. lower actual
weight) and overestimated weight present with no filtering, that are removed by filtering.

\begin{figure}
\begin{center}
\begin{tabular}{cc}
\includegraphics[width=1.5in,height=1.55in]{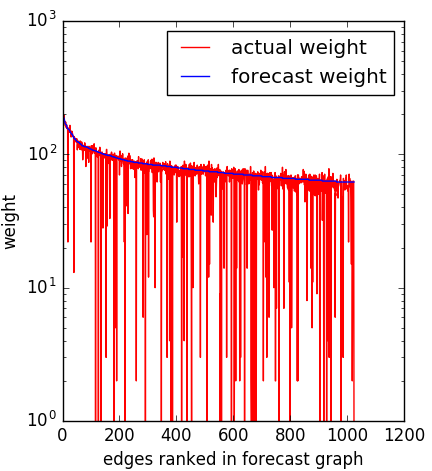}&
\includegraphics[width=1.5in,height=1.55in]{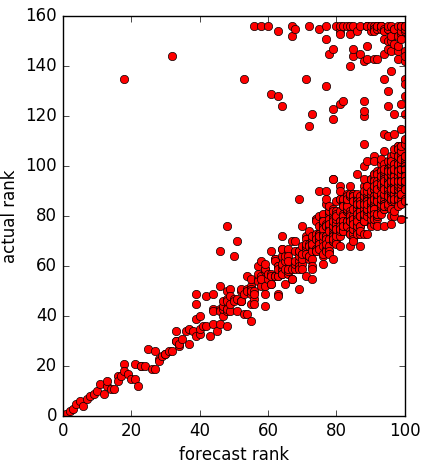}\\
\includegraphics[width=1.5in,height=1.55in]{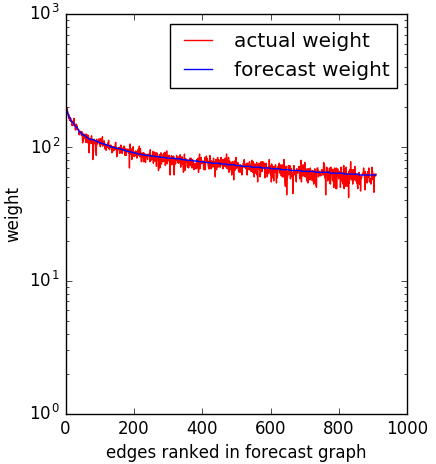}&
\includegraphics[width=1.5in,height=1.55in]{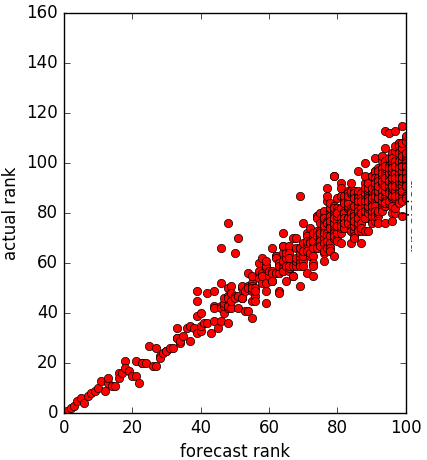}\\
\end{tabular}
\end{center}
\vspace{-5mm}
\caption{Noise and Filtering. \datagh\ $f_m=10\%$. Top-100
  dense ranked edges. Top: No
  filtering. Bottom: filter threshold 10. Left: forecast and actual
  weights. Right: scatter of (forecast, actual) ranks.}
\label{fig:filter2}
\end{figure}

\section{Related Work}\label{sec-related}

A number of problems specific to bipartite graphs have recently
attracted attention in the streaming or semi-streaming
context. The classic problem of bipartite matching has been considered
for semi-streaming  \cite{Eggert:2012:BMS:2150816.2150835,Kliemann2011} and
streaming \cite{Goel:2012:CSC:2095116.2095157} data.
Identifying top-k queries in graphs streams has been studied in
\cite{Pan:2012:CTQ:2396761.2398717}.  The
Adaptive Graph Priority Sampling of this paper builds on the graph priority sampling framework 
\gps\ in \cite{Ahmed:2017:SMG:3137628.3137651} while the
second sample aggregation method appears in
\cite{DuffieldCIKM2017}. Graph stream sampling for subgraph counting
is addressed in
\cite{Ahmed:2017:SMG:3137628.3137651,Jha:2015:SSA:2737800.2700395,Stefani:2017:TCL:3119906.3059194,ZB2017,Ahmed-gSH}
amongst others; see \cite{AhmedTKDD} for a review. \cite{7498270}
is closer to our present work in that it provides a sample-based
estimate of the CN count, albeit not specialized to the bipartite
context. We make a detailed comparison of design and performance of 
\cite{7498270} with our proposed approach in in Section~\ref{sec:eval:res}.

\section{Summary and Conclusion}\label{sec-conclusion}

This paper has proposed a sample-based estimator of the
similarity (or projection graph) induced by a bipartite edge
stream, i.e., the weighted graph whose edge weights or similarities are the numbers
of common neighbors of its endpoint nodes. 
The statistical properties of real-world bipartite graphs
provide an opportunity for weighted sampling that devotes resources to
nodes with high similarity edges in the projected graph. 
Our proposed algorithm provides unbiased estimates of
similarity graph edges in fixed storage without prior knowledge of the
graph edge stream. With a relatively small sample of
bipartite and the similarity graph edges (10\% in each case), and with
the enhancement of count based filtering of similarity edges at
threshold 10, the sampled similarity edge set reproduces the actual
similarities of sampled edges with errors of about $10^{-2}$ for top-100 dense estimate ranked edges, rising to an error of
about $10^{-1}$ when most estimated edges are considered.  Indeed,
for the parameters used, the rank distribution
of the sampled similarity graph is very similar to that
of the actual graph for all but the highest ranks.

\bibliographystyle{named}      
\bibliography{paper,cycle,ahmed,bibmaster,psh,bipartite}

\end{document}